\definecolor{brickred}{RGB}{188, 67, 67}
\theoremstyle{remark}
\newtheorem{proposition}{Proposition}[]
\newcommand{\mc}{\mathcal}
\newcommand{\wt}{\widetilde}
\newcommand{\lt}{\left}
\newcommand{\rt}{\right}
\newcommand{\mf}{\mathfrak}
\title{An Information Theory of Compute-Optimal Size Scaling, Emergence, and Plateaus in Language Models}
\begin{document}

\maketitle

\begin{abstract}
Recent empirical studies show three phenomena with increasing size of language models: \textit{compute-optimal size scaling}, \textit{emergent capabilities}, and \textit{performance plateauing}. We present a simple unified mathematical framework to explain all of these language model scaling phenomena, building on recent skill-text bipartite graph frameworks for semantic learning.  Modeling the learning of concepts from texts as an iterative process yields an analogy to iterative decoding of low-density parity check (LDPC) codes in information theory. Thence, drawing on finite-size scaling characterizations of LDPC decoding, we derive the compute-optimal size scaling (Chinchilla rule) for language models. Further, using tools from random network theory, we provide a simple explanation for both emergence of complex skills and plateauing of performance as the size of language models scale.  We see multiple plateaus.
\end{abstract}

\section{Introduction}

To optimally use computational resources when training language models, several recent studies have empirically investigated how model size and dataset size should scale with compute budget \cite{kaplan2020scaling, hoffmann2022training}, finding a certain \emph{allometric rule} much like in mathematical biology \cite{Thompson1917,Haldane1926}. As the sizes of language models continue to increase, large improvements in performance have been observed in certain complex tasks with only a small improvement in the model's loss \cite{wei2022emergent} (but see \citep{SchaefferMK2023}). The larger language models are therefore said to exhibit \emph{emergent capabilities}, a term drawn from statistical mechanics, where small changes in a macroscopic variable of the system (such as temperature) around a critical value cause an abrupt change---a phase transition or emergent behavior \cite{Baxter1982}---in its properties. More recently, there has been prevalent discourse in the AI community that further increases in language model size lead to \emph{plateauing} of performance \cite{plateau_steven_byrnes, plateau_ritter_lu_1}. Although, there have been attempts to explain one or two of these empirical phenomena, a unified mathematical framework that explains all three of these empirically observed phenomena is lacking.  

Here we take an approach that builds on information and coding theory \cite{McEliece2002} that does so, and also predicts multiple plateaus.  In particular, we draw on mathematical ideas around low-density parity check (LDPC) codes (which achieve Shannon optimality) \cite{Sourlas1989,richardson2008modern} and random graph theory \cite{Barabasi2016}. Though statistical language modeling and information theory were introduced in the same paper \cite{Shannon1948}, modern connections between the two are still fairly limited, cf.\ \citep{BasuCV2023}.

To provide simple and insightful explanations of empirical phenomena, several abstract frameworks have been proposed \cite{arora2023theory, hong2024mathematical, tegmarkquantization}, all based on a skill-text bipartite graph that operates at a semantic level and captures key real-world properties \cite{YuKGBGA2023}. Arora and Goyal \citep{arora2023theory} explain emergent phenomena by assuming a compute-optimal size scaling rule (Chinchilla allometry rule) \cite{hoffmann2022training}. Liao et al.\ \citep{hong2024mathematical} also assume compute-optimal (Chinchilla) size scaling to explain emergence. Michaud et al.\ \citep{tegmarkquantization} assume power-law scaling and that each text piece contains only one skill, which may be very different than real-world scenarios. Moreover, inverse polynomial loss scaling is interpreted as the average behavior of emergence at different scales.  These existing frameworks explain neither the Chinchilla rule nor the plateau phenomenon.  These three frameworks abstract the gradient dynamics of language model training \cite{arora2023theory}; an alternate mathematical framework considers dynamics to explain the Chinchilla rule and loss function plateaus but does not consider emergence\cite{BordelonAP2024}.

Our information-theoretic approach is inspired by skill-text bipartite graph frameworks of \citep{arora2023theory, hong2024mathematical, tegmarkquantization} and is closest to \citep{hong2024mathematical}.  We make a small modification by separating notions of concepts and skills, as in well-established human cognitive architectures \cite{Newell1990} that have simple hierarchies \cite{LairdNR1987,Anderson1993,KierasM1997}. In our framework, skills are not directly learned from text; rather, concepts are learned from texts and skills at different levels are learned from concepts (see Section~\ref{sec:framework} for a detailed description). That is, our framework takes on the notion of skill-quanta from \citep{tegmarkquantization}, and so the number of concepts a language model can learn is proportional to the model size.

The key difference in our work is to have much more detailed and expressive analysis using non-asymptotic techniques rather than asymptotic ones \cite{di2002finite}.  Indeed, such finitary analysis is necessary to even consider size scaling. Recall that \citep{arora2023theory,hong2024mathematical} assume Chinchilla scaling, whereas we derive it without it being built into our framework.

The main contributions of this paper are as follows.
\setlist{nolistsep}
\begin{enumerate}[noitemsep]
    \item We propose a simple unified mathematical framework that considers a language model's learning of concepts from texts and composition of skills from concepts.
    \item Using this framework and tools from non-asymptotic information theory, we deduce compute-optimal scaling in language models.
    \item With the help of random network theory, we provide a simple explanation for emergent abilities of language models in complex tasks when their sizes exceed a certain threshold.
    \item We show that plateauing of performance with size-scaling is just a consequence of diversity of skills required for a task. Moreover, plateauing indicates the possibility of multiple emergences as language models continue to scale further.
\end{enumerate}
 
Our work takes a step in grounding empirical phenomena observed in size scaling of language models on a rigorous mathematical footing.  Understanding the origin of these phenomena may yield insights into better architectures, better datasets, and the limitations of large-scale learning systems. Provable optimality of the Chinchilla rule (as in Proposition~\ref{proposition:chinchilla_rule}), however, may indicate that there are no gains from better scaling of data and compute remaining, cf.~\citep[Appendix B]{HoBEORGATS2024}.  Separately, our results may help policymakers develop regulatory policy by providing insight into the relationship between capabilities of concern and controllable resources such as data and compute \cite{Hooker2024}.  

\section{Graph-based framework}
\label{sec:framework}
Our framework is based on the notion of learning as two levels. First, a set of concepts are learnt from a set of texts with each text involving one or more skills. Second, learning concepts enables the language model to acquire skills, and after encountering a sufficient number of texts with co-occurring pairs of skills, the model eventually acquires compositional abilities resulting in emergent phenomena in various complex tasks.  The framework naturally leads to information-theoretic analysis in Section~\ref{sec:explanation}.

\subsection{Texts, concepts, and skills}
A set of tokens constitute a text piece from which a language model can learn a wide variety of concepts. This is modeled as a concept-text bipartite graph similar to the skill-text bipartite graph in \citep{hong2024mathematical}. In a given training session (single epoch training), a language model chooses to learn only a subset of concepts from a text piece. The total number of skills a model can learn depends on its size. Here we consider a hierarchy of skills: basic skills in the first layer and multiple layers of advanced skills. Basic skills are easily acquired from concepts, whereas acquiring advanced skills additionally requires certain prerequisite (less advanced) skills. We formalize these semantic learning notions in the sequel.  Note that this bipartite graph formulation of learning is intimately related to graph-based approaches to data compression \cite{MartinianY2003} and associative memory  \cite{KarbasiSSV2013}.  Moreover, although this approach to abstract modeling has been tied to Transformer-based language modeling architectures \cite{YuKGBGA2023}, it can describe a variety of quite different learning paradigms \cite{YuEV2023}.

\subsection{Notation}
\label{subsec:notation}
Let $\mc{T}$ be a subset of text pieces from a set $\mf{T}$, and let $\mc{R}$ be a subset of concepts from a set $\mf{R}$. Let the model size $N$ (number of parameters) be proportional to the number of concepts $R = |\mc{R}|$, i.e., $N = \varsigma R$, for some $\varsigma > 0$.\footnote{Here, a \emph{concept} is similar to a \emph{skill quantum} in \citep{tegmarkquantization}.} Similarly, let $\tau$ be the number of tokens in a text piece $t \in \mc{T}$ with $T = |\mc{T}|$, implying that the dataset size $D = \tau T$. For a given compute budget $C$,\footnote{Compute budget is measured in number of floating point operations or FLOPs \cite{hoffmann2022training}.} a language model of size $N$ can be trained using a dataset of size $D$ so the constraint $6 N D \leq C$ is satisfied (see \citep{hoffmann2022training}). 

Correspondingly, for a given compute budget, $G_1^{(C)} = (\mc{T} \cup \mc{R}, E_{\mc{T} \mc{R}})$ denotes a concept-text bipartite graph, where an edge $e_{tr} \in E_{\mc{T} \mc{R}}$ indicates that the language model can learn concept $r$ from text $t$. Let the degrees of text pieces (number of skills required to understand a text) be binomially distributed with a fixed mean degree $d_t$, i.e., $P_{R} = \mbox{Binomial}(n, p) = \mbox{Binomial}(R, d_t/R)$. The corresponding generating function is $P_{R}(x) = \sum_i P_i x^i$. Let the degree distribution of concepts be $L_T = \mbox{Binomial}(T, d_r/T)$, where $d_r = d_t T /R$. Note that $d_t/R = d_r/T =: p $. There is an alternate point of view: If we assume that there exists an edge between a text piece and a concept with probability $d_t/T$, then a typical graph will have text and concept degree distributions close to $P_R$ and $L_T$, respectively. It is generally useful to view degree distribution from an edge-perspective, which is $\lambda_T(x) = L^\prime_T(x)/L^\prime_T(1)$ and $\rho_R(x) = P^\prime_R(x)/P^\prime_R(1)$ \cite{richardson2008modern}.

{\color{black}Let $G_2 = (\mf{R} \cup \mc{S}, E_{\mf{R}\mc{S}})$ be a skill-concept graph, where $\mc{S} = \cup_l \mc{S}^{(l)}$ denotes a set of hierarchical skills, with finite number $S^{(l)}$ of skills in each level $l$. Each concept is connected to a unique skill at every level $l$, i.e., each concept enables learning of one skill at each level, and each skill $s^{(l)}$ is connected to $\sigma_{l}$ prerequisite skills at level $l-1$. Our unified framework is represented by the graph $G^{(C)} = G_1^{(C)} \cup G_2$ as shown in Figure~\ref{fig:skill_concept_text_ldpc_emergence}.}

\begin{figure}
    \vspace{-1cm}
    \centering
    \includegraphics[scale=0.24]{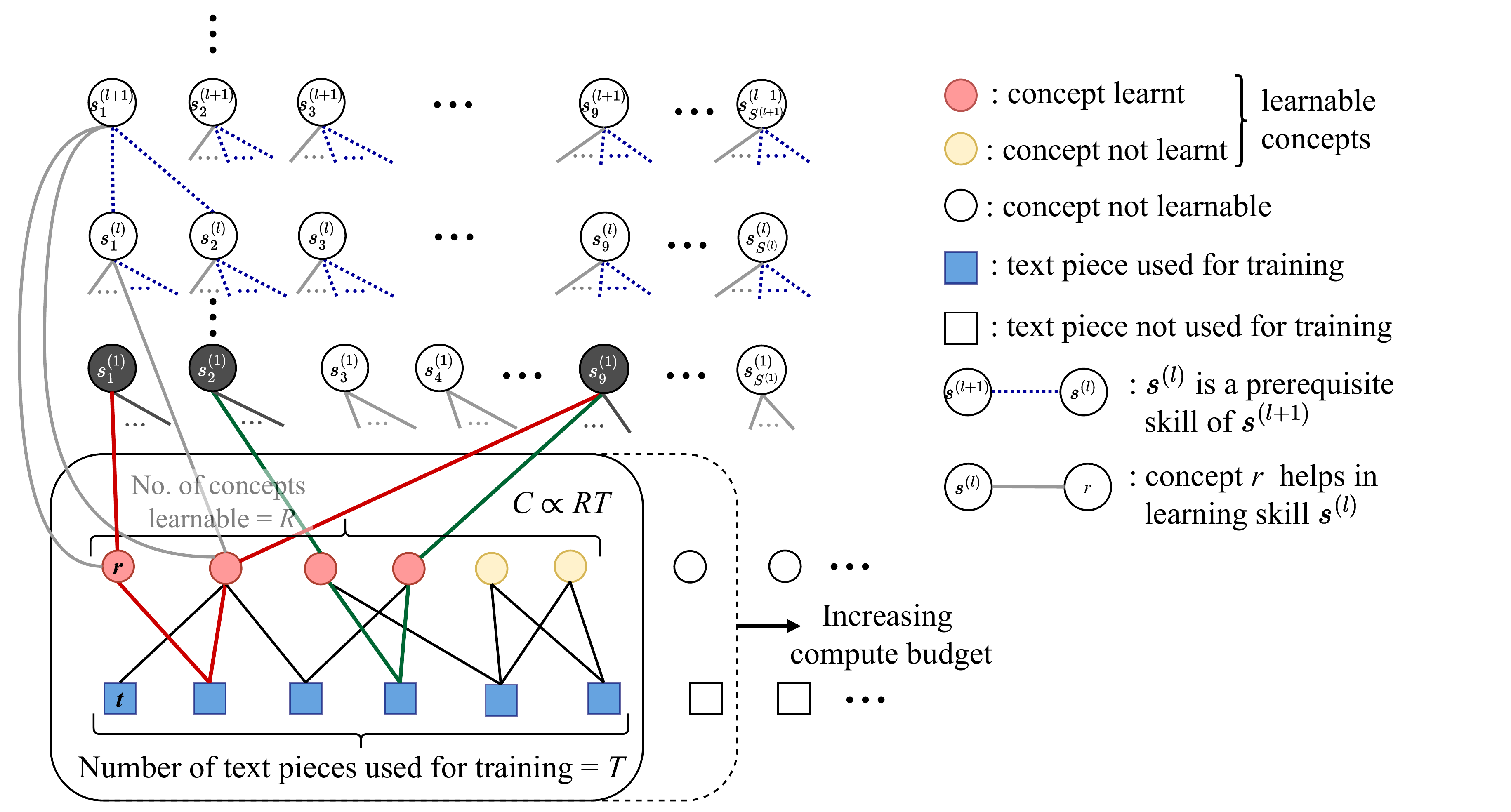}
    \caption{A unified framework of learning concepts and skills by language models. The lower subgraph $G^{(C)}_1$ is a concept-text bipartite graph akin to a Tanner graph representation of an LDPC code. The upper subgraph $G_2$ shows concept-skill and skill-to-skill relationships, with multiple levels of skills denoted by $l$. Higher $l$ indicates more advanced skills.} \label{fig:skill_concept_text_ldpc_emergence}
\end{figure}

\subsection{Learning concepts from text pieces}
\label{subsec:peeling_learning}
Following the approach described in \citep{hong2024mathematical}, we assume that a language model learns concepts from text pieces as an iterative peeling process. For a self-contained explanation, let us briefly describe the peeling process here. Let $\mc{R}^{(u)}_+$ denote the set of concepts learnt, and $\mc{R}^{(u)}_{-}$ denote the set of concepts not learnt in peeling iteration $u$. Initially, all the concepts are unlearned, i.e., $\mc{R}^{(0)}_{-} = \mc{R}$ and $\mc{R}^{(0)}_{-} = \emptyset$. Next, a language model learns a concept $r \in \mc{R}^{(0)}_{-}$ if a text piece $t \in \mc{T}$ is uniquely connected to $r$ yielding $\mc{R}^{(1)}_{+} = \{r\}$ and $\mc{R}^{(1)}_{-} = \mc{R}^{(0)}_{-}\setminus\{r\}$. Before the next iteration, the edge $e_{tr}$ and concept node $r$ from the graph are removed. The next iteration starts by finding another text piece uniquely connected to a concept in $\mc{R}^{(1)}_{-}$, and this peeling process continues until there is either no more text piece/s connected to a unique concept in $\mc{R}_{-}$ or all the concepts are learnt, i.e., $\mc{R}_{+} = \mc{R}$.

\subsection{Acquisition of skills and composition of skills}

{\color{black}A skill $s^{(l+1)}$ at level $l+1$ is considered acquired when two conditions hold: 1) all the $\sigma_{l+1}$ prerequisite skills at the lower level $l$ are learnt, and 2) at least one concept associated with $s^{(l+1)}$ is learnt. A pair of concepts $(r_1, r_2)$ is considered connected (denoted by $r_1 - r_2$) if there is a path $r_1 - t - r_2$ through at least one text $t \in \mc{T}$. Then, for a fixed level $l$, a skill-graph $G_2^{(l)} = (\mc{S}^{(l)}, E_{\mc{S}^{(l)} \times \mc{S}^{(l)}})$ is constructed as follows: A pair of skills $s_1$ and $s_2$ in $\mc{S}^{(l)}$ has a direct link (i.e., $e_{s_1 s_2} \in E_{\mc{S}^{(l)} \times \mc{S}^{(l)}}$) if there are at least $\eta_l$ distinct paths $s^{(l)}_1 - r_1 - r_2 -  s^{(l)}_2$ (with at least $\eta_l$ distinct pairs of concepts $(r_1, r_2)$), and all the $2 \sigma_{l}$ prerequisite skills required for both skills are acquired. The intuition behind this construction is that a pair of skills is connected (and therefore can be composed) if they co-occur sufficiently many times through distinct pairs of concepts in the training data, and all prerequisite skills of both skills are already acquired. Further, since more advanced skills are generally hard to learn, skills at higher levels (larger values of $l$) need larger values of $\eta_l$.}

\subsection{Defining emergence}
\label{subsec:defining_emergence}
In the context of neural language models, there are several definitions of skill emergence in the literature. In most existing frameworks, skills are directly associated with texts. In the skill-text bipartite graph framework of \citep{arora2023theory}, the fraction of text pieces in error (incorrect answers to cloze questions) is obtained from the Chinchilla rule, where the error fraction is smaller for larger model sizes. Emergence is defined in terms of the error rate of the skill-tuple, i.e., the fraction of edges to error-marked text pieces from $k$-tuple of skills, as follows: For a fixed target error threshold, with an increase in model size, emergence is defined as increase in the largest size of the skill tuple $k$ whose error rate is below the threshold. According to this definition of emergence, there is no phase transition and conforms to the notion that emergence is slow. 

In \citep{hong2024mathematical}, emergence is defined as a function of the ratio of number of text pieces to skills: with increase in the ratio of number of texts to skills, emergence is defined as the increase in the size (normalized) of the largest connected component corresponding to the learnt skills. This definition of emergence exhibits a phase transition around a specific value of text-to-skill ratio. However, this definition of emergence as a function of text-to-skill ratio (not of model size) does not follow the definition of emergence, for example in \citep{wei2022emergent}: ``An ability is emergent if it is not present in smaller models but is present in larger models.'' 

A critical view of emergent abilities is given in \citep{SchaefferMK2023}, arguing that emergence in performance (e.g.\ in terms of accuracy) is only a consequence of quantization of another metric (e.g.\ token edit distance) which shows gradual improvement with size scaling, and hence is only a mirage. Although this argument holds, we maintain a more optimistic view. How the performance of the model is measured is important, but corresponding to a performance metric, there is an abstract quantity such as the ability to compose multiple skills, which a language model gains when the model size exceeds a certain threshold to exhibit a true phase transition.

In our framework, advanced skills (larger $l$) are acquired from concepts and more basic skills, rather than directly from text pieces. To describe the composition of skills not seen in training, we begin by asserting transitivity of skill composition for a fixed skill level $l$: if the training data contains enough text pieces with composition of both pairs $(s^{(l)}_1, s^{(l)}_2)$ and $(s^{(l)}_2, s^{(l)}_3)$, then a language model is capable of composing skill $s^{(l)}_1$ and $s^{(l)}_3$. Consequently, a language model successfully performs a sub-task requiring a composition of a set of skills $\mc{S}_\theta^{(l)} \subseteq \mc{S}^{(l)}$ if there is a path between every pair of skills belonging to $\mc{S}_\theta^{(l)}$ in graph $G_2^{(l)}$. 

For small compute budgets, dataset size corresponding to compute-optimal performance is small, in which case the training data contains composition of only a small number of skill pairs. As compute budget increases, the size of the training data increases, and therefore the number of composed skill pairs seen by the language model during training increases. Beyond a certain compute-budget threshold and due to skill composition transitivity, the ability of the language model to compose most skill pairs emerges, appearing as a phase transition around this compute-budget threshold. As we will see in Section~\ref{sec:emergence}, this phase transition is related to the appearance of a giant connected component (GCC) in random graphs with increasing edge probability. Our definition of emergence exhibits phase transition as empirically observed in language models, and our finitary analysis helps in conforming to the definition of emergence in \citep{wei2022emergent}.

\section{Explaining all three phenomena}
\label{sec:explanation}
Using the framework in Section~\ref{sec:framework}, we aim to explain the compute-optimal (Chinchilla) scaling rule by applying non-asymptotic information-theoretic tools to the bipartite graph $G^{(C)}_1$ and to explain emergence and plateauing phenomena based on the density of connections in the skill-graphs $\{G_2^{(l)}\}_l$.

\subsection{Compute-optimal scaling rule}
Let $\mc{R}_+ \subseteq \mc{R}$ denote the set of concepts learnt after the peeling process terminates. Note that the corresponding number of concepts $R_+ = |\mc{R}_+|$ is a random variable. The goal of the language model is to learn as many concepts as possible from the text pieces under the compute budget constraint $C$, which yields the following constrained optimization problem.
\begin{align}
\underset{R, T}{\mbox{maximize }} & \mathbbm{E}_{G^{(C)}_1 \sim (\lambda_T, \rho_R)}[R_+] \label{eqn:maximize_skills_learnt} \\
\mbox{s.t. } & R T \leq C^\prime, \nonumber
\end{align}
where the number of model parameters $N=\varsigma R$,  number of tokens in a text piece is $\tau$, $C^\prime = \frac{C}{6~\varsigma~\tau}$, and $(R^*, T^*)$ is the maximizer of the objective function in \eqref{eqn:maximize_skills_learnt}. It follows directly that the objective function in \eqref{eqn:maximize_skills_learnt} can be rewritten as:
\begin{equation}
    \mathbbm{E}_{G^{(C)}_1 \sim (\lambda_T, \rho_R)}[R_+] = R (1-\Pr\{\mbox{$r \notin \mc{R}_+ | R, T$}\}) \mbox{.}
\end{equation}

For a bipartite graph sampled from a degree distribution pair $(\lambda_T, \rho_R)$, one may exactly compute the number of learned concepts using combinatorial arguments. However, the exact analysis becomes computationally expensive very quickly with increasing compute budget $C$ (equivalently $R$ and $T$). Moreover, since we are mainly interested in scaling behavior, the exact analysis may not be very insightful. Fortunately, observing that the peeling process is equivalent to iterative decoding of LDPC codes when the codeword symbols are corrupted by erasure, allows us to sidestep this difficulty. The trick is to construct a parent bipartite graph $\wt{G}^{(C)}_1$ with $(1-\epsilon) R/\epsilon$ additional concept nodes and degree distribution pair $\lambda_T$ and $\wt{\rho}_R$, such that the peeling process in this graph appears as belief propagation decoding of the $\epsilon$ fraction of erased codeword symbols (see Appendix~\ref{apndx:pr_skill_not_learnt_equiv_cwd_BER} for details), which yields
\begin{equation}
    \Pr\{\mbox{$r \notin \mc{R}_+ | R, T$}\} = \frac{P_{b, \lambda_T, \wt{\rho}_R}}{\epsilon},
\end{equation}
where $P_{b, \lambda_T, \wt{\rho}_R}$ is the post-decoding bit erasure rate corresponding to $\wt{G}^{(C)}_1$. 

Before providing an expression for $P_{b, \lambda_T, \wt{\rho}_R}$ some notations are as follows: let $f(x, \epsilon) = \epsilon \lambda_T(1-\wt{\rho}_R(1-x))$, then the decoding threshold $\epsilon^* = \inf\{\epsilon \in [0, 1]: x = f(x, \epsilon) \mbox{ has a solution in } x\in (0, 1] \}$, $x^*$ be a critical point satisfying $x^* = f(x^*, \epsilon^*)$, $\nu^* = \epsilon^* ~ L_{T}(1-\wt{\rho}_\mc{R}(1-x^*))$. Substituting for the post-decoding bit erasure rate $P_{b, \lambda_T, \wt{\rho}_R}$, the objective function in \eqref{eqn:maximize_skills_learnt} is given by (see Appendix~\ref{apndx:pr_skill_not_learnt_equiv_cwd_BER} for more details):
\begin{equation}
    \mathbbm{E}_{G^{(C)}_1 \sim (\lambda_T, \rho_R)}[R_+] \approx R \lt(1- \frac{\nu^{*}}{\epsilon} Q\left(\sqrt{\frac{R}{\epsilon}}\frac{(\epsilon^* - \epsilon)}{\alpha}\right)\rt),
\end{equation}
where $\alpha$ depends on the degree distribution pair $(\lambda_T, \wt{\rho}_R)$ (see Appendix~\ref{apndx:pr_skill_not_learnt_equiv_cwd_BER} for the closed-form expression), and $Q(\cdot)$ is the complementary Gaussian cumulative distribution function. In Figure~\ref{fig:IsoFLOP_curves_Rlearnt_epsBP}, the objective function in \eqref{eqn:maximize_skills_learnt} is plotted against the number of concepts $R$ for multiple compute budgets. In the left subfigure, each curve corresponds to a fixed compute budget. Note that smaller values of $R$ correspond to smaller language model sizes, in which case the dataset size (number of texts $T$) is more than necessary for the model to learn all the skills. Contrarily, for large model sizes, the smaller dataset size is insufficient to learn the concepts well. There is an optimum model size and dataset size pair (equivalently $R$ and $T$) such that the number of concepts learnt is maximized, as indicated by a solid black marker for each compute budget $C$. This figure is analogous to isoFLOP curves in \cite[Figure~2]{hoffmann2022training}, where training loss is plotted against model size for different compute budgets.

Compute-optimal size scaling of model size and dataset size with increasing compute budget obtained by numerically solving \eqref{eqn:maximize_skills_learnt} is shown Figure~\ref{fig:size_and_loss_scaling}(a). The markers in the figure correspond to the empirically predicted model size and dataset size for compute-optimal performance of the Chinchilla model reported by \citep{hoffmann2022training} when the compute budget is $5.76 \times 10^{23}$. In the following proposition, we prove that the Chinchilla rule is optimal.

\begin{figure}
    \centering
    \includegraphics[scale=0.6]{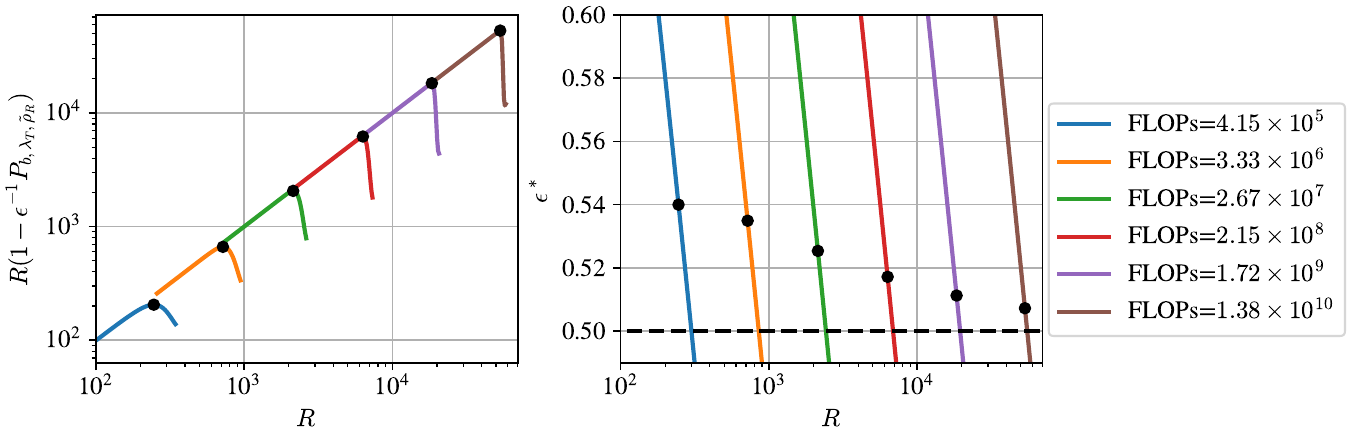}
    \caption{\textbf{IsoFLOP curves}: (\textbf{left}) Number of concepts learnt as a function of $R$ for different compute budgets (FLOPs); (\textbf{right})  Block erasure threshold as a function of the number of concepts $R$ for different compute budget. In both subfigures, solid black markers indicate the points corresponding to $R^*$.}
    \label{fig:IsoFLOP_curves_Rlearnt_epsBP}
\end{figure}

\begin{proposition}
\label{proposition:chinchilla_rule}
\textbf{Compute-optimal scaling rule}: For compute-optimal performance of a language model, the dataset size ($D$) and model size ($N$) must scale equally with the increasing compute budget $C$ (or FLOPs).
\end{proposition}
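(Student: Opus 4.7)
The plan is to show that both $R^{*}$ and $T^{*}$ (and hence $N^{*}$ and $D^{*}$) scale as $\sqrt{C}$ with the same exponent, which is exactly the Chinchilla rule. First, I would argue the budget constraint is tight at optimum: since adding more texts can only help the peeling process (it monotonically enlarges the set of text pieces that might be uniquely connected to a concept), $\mathbb{E}[R_+]$ is non-decreasing in $T$ for fixed $R$, so $R T = C'$ at any maximizer. This reduces the two-variable optimization to a one-variable problem. I would then parametrize by the text-to-concept ratio $\beta = T/R$, giving $R(\beta) = \sqrt{C'/\beta}$ and $T(\beta) = \sqrt{C' \beta}$, and show that the optimal $\beta^{*}$ is a constant independent of $C$, immediately yielding $N^{*} \propto \sqrt{C} \propto D^{*}$.

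The crucial structural observation is that the ensemble parameters $\epsilon, \epsilon^{*}, \nu^{*}, \alpha$ entering the finite-size expression
\begin{equation*}
\mathbb{E}[R_{+}] \approx R\Bigl(1 - \tfrac{\nu^{*}}{\epsilon}\, Q\bigl(\sqrt{R/\epsilon}\, (\epsilon^{*}-\epsilon)/\alpha\bigr)\Bigr)
\end{equation*}
depend on $R$ and $T$ only through the edge-perspective degree distributions $(\lambda_{T}, \widetilde{\rho}_{R})$, and these in turn depend only on the ratio $T/R = \beta$ because the binomial degree distributions are pinned by the common edge density $p = d_{t}/R = d_{r}/T$. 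Consequently $\epsilon(\beta), \epsilon^{*}(\beta), \nu^{*}(\beta), \alpha(\beta)$ are all $C$-independent functions of $\beta$ alone, so substitution yields
\begin{equation*}
\mathbb{E}[R_{+}] \approx \sqrt{C'/\beta}\,\Bigl(1 - \tfrac{\nu^{*}(\beta)}{\epsilon(\beta)}\, Q\bigl(C'^{1/4}\beta^{-1/4}\epsilon(\beta)^{-1/2}(\epsilon^{*}(\beta)-\epsilon(\beta))/\alpha(\beta)\bigr)\Bigr).
\end{equation*}

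From here I would complete the argument by dominant balance. In the subcritical regime $\epsilon(\beta) < \epsilon^{*}(\beta)$, the $Q$-function argument grows like $C'^{1/4}$, so the error term is exponentially small and the leading objective is $\sqrt{C'/\beta}$, which is decreasing in $\beta$; thus the optimizer pushes $\beta$ down toward the threshold $\beta^{\dagger} := \inf\{\beta : \epsilon(\beta) < \epsilon^{*}(\beta)\}$. In the supercritical regime $\epsilon(\beta) > \epsilon^{*}(\beta)$, the $Q$-factor is $\Theta(1)$ and a constant fraction of concepts go unlearned, so increasing $R$ there gives sublinear returns and is strictly dominated. The optimum therefore sits at $\beta^{*} = \beta^{\dagger} + O(C'^{-1/2}\log^{1/2} C')$, where the correction comes from trading off the width $1/\sqrt{R}$ of the finite-size scaling window against the $\sqrt{C'/\beta}$ prefactor. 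Since $\beta^{\dagger}$ is determined by the fixed ensemble and hence $C$-independent, the compute-optimal pair satisfies $R^{*}, T^{*} = \Theta(\sqrt{C'})$ with equal exponents, giving the claim.

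The main obstacle is the boundary analysis at $\beta = \beta^{\dagger}$: here the $Q$-function argument is small and the exponential-smallness estimate fails, so one cannot simply maximize the leading term. Making the $O(C'^{-1/2}\log^{1/2} C')$ shift rigorous requires the finite-size scaling machinery of \cite{di2002finite} for LDPC ensembles near the BP threshold, namely the Gaussian scaling law for the block/bit error rate in the window $|\epsilon - \epsilon^{*}| = O(1/\sqrt{R})$. Once that scaling is imported, verifying that the correction to $\beta^{*}$ is of lower order than the $\sqrt{C'}$ scaling is routine, and the empirical Chinchilla fit in Figure~\ref{fig:size_and_loss_scaling}(a) serves as a sanity check that the predicted constant $\beta^{\dagger}$ is in the right range.
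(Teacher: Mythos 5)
Your proof is correct in substance but takes a genuinely different route from the paper. The paper proceeds by exclusion: it shows that $T/R = o(1)$ forces decoding failure via the matching-condition upper bound on the threshold $\epsilon^{*}$, and that $R/T = o(1)$ is strictly dominated by a perturbation $(\hat{R}_C,\hat{T}_C) = (R^{*}_C(1+\delta), T^{*}_C/(1+\delta))$ that is still feasible and still in the deep-subcritical regime, so that the optimizer must have $R/T = \Theta(1)$. You instead first argue constraint tightness (monotonicity of $\mathbb{E}[R_+]$ in $T$ — a point the paper tacitly assumes and you make explicit), reduce to a one-variable problem in $\beta = T/R$, note that the ensemble quantities in the finite-size scaling law are asymptotically functions of $\beta$ alone, and then locate the optimizer via dominant balance at a $C$-independent threshold $\beta^{\dagger}$. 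What the paper's approach buys is avoidance of the boundary-layer analysis near $\beta^{\dagger}$ that you correctly flag as the hard part: the matching condition and the multiplicative perturbation argument work purely at the leading-order asymptotic level and sidestep the $O(C'^{-1/2}\log^{1/2}C')$ correction. What your approach buys is a sharper, constructive statement — it identifies the limiting constant $\beta^{\dagger}$ that the paper only detects numerically in Figure~\ref{fig:size_and_loss_scaling}(a). One small slip: in the paper's construction $\epsilon$ is a free design parameter (fixed at $0.5$ for numerics, chosen so the post-decoding rate is $\epsilon$-invariant), not a function of $\beta$; your $\epsilon(\beta)$ and the definition $\beta^{\dagger} = \inf\{\beta : \epsilon(\beta) < \epsilon^{*}(\beta)\}$ should read $\beta^{\dagger} = \inf\{\beta : \epsilon < \epsilon^{*}(\beta)\}$ with $\epsilon$ fixed. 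The argument is unaffected, since $\epsilon^{*}(\beta)$ is increasing in $\beta$ (the parent-code rate $1 - \epsilon\beta$ decreases with $\beta$).
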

\begin{proof}

The approach is to prove that neither $T/R = o(1)$ nor $R/T = o(1)$ maximizes the objective function in \eqref{eqn:maximize_skills_learnt}. This implies that $R/T$ must be a constant, i.e., $R$ and $T$ must scale equally with compute budget $C$.

Denote $\epsilon^*$ be the decoding threshold corresponding to the degree distribution pair $(\lambda_T, \wt{\rho}_R)$. From the matching condition \cite{richardson2008modern}, we have
\begin{equation*}
    \epsilon^* \leq \frac{\int \wt{\rho}_R}{\int \lambda_T} =: \epsilon^*_{ub}
\end{equation*}

\begin{enumerate}[label=(\alph*)]
\item If $\frac{T}{R} = o(1)$ (i.e., $\frac{T}{R}$ decays as $C \rightarrow \infty$), then
\begin{equation*}
    \epsilon_{ub}^* - \epsilon \leq \epsilon \lt( \lt(1-e^{-d/\epsilon} + \frac{d^2}{\epsilon R} \rt) \lt( \frac{1}{d} + \frac{T}{R} \rt) - 1 \rt) \xrightarrow[]{C \rightarrow \infty} \epsilon \lt( 
 \frac{(1-e^{-d/\epsilon})}{d} - 1 \rt) < 0,
\end{equation*}
which implies that $P_{b, \lambda_T, \wt{\rho}_R} \rightarrow 1$. Therefore, number of skills learnt vanishes for large $C$.
\item Consider $\frac{R}{T} = o(1)$.
From the fixed point characterization of decoding threshold of LDPC codes, we have
\begin{align}
    f(x, \epsilon^*) &= \epsilon^* \lambda_T(1-\wt{\rho}_R(1-x)), \nonumber \\
    &= \epsilon^* (1-(1-x p)^{\frac{R}{\epsilon}-1} p)^{T-1}, \label{eqn:chinchilla_proof_fx_pt}
\end{align}
where $p = d_t/R$. Since $R/T = o(1)$, the number of text pieces $T$ grows strictly faster than $R$ with respect to compute budget $C$, implying that the second term in \eqref{eqn:chinchilla_proof_fx_pt}, i.e., $(1-(1-x p)^{\frac{R}{\epsilon}-1} p)^{T-1} \rightarrow 0$ for large $C$. Therefore, for a non-trivial solution, i.e., $x = f(x, \epsilon^*) \in (0, 1]$, the decoding threshold $\epsilon^*$ must be very large. As a result, the post-decoding bit erasure rate $P_{b, \lambda_T, \wt{\rho}_R}$ vanishes for large $C$. 

Suppose, $(R^*_C, T^*_C)$ such that $R^*_C/T^*_C = o(1)$ minimizes \eqref{eqn:maximize_skills_learnt}. Now, consider $\hat{R}_C = R^*_C (1+\delta)$ and $\hat{T}_C = T^*_C/(1+\delta)$. Note that $\hat{R}_C/\hat{T}_C = (1+\delta)^2 R^*_C/T^*_C = o(1)$. Therefore, for any $\delta^\prime \in (0, \delta)$, there exists $C_0$ such that for all $C \geq C_0$ the bit erasure rate $\epsilon^{-1} P_{b, \lambda_{\hat{T}_C}, \wt{\rho}_{\hat{R}_C}} \leq \delta^\prime/(1+\delta^\prime)$. Now consider the ratio of number of concepts learnt:
\begin{equation}
    \frac{\hat{R}_C (1- \epsilon^{-1} P_{b, \lambda_{\hat{T}_C}, \wt{\rho}_{\hat{R}_C})}}{R^*_C (1- \epsilon^{-1} P_{b, \lambda_{T^*_C}, \wt{\rho}_{R^*_C}})} \geq \frac{R^*_C (1+\delta) \lt( 1 - \frac{\delta^\prime}{1+\delta^\prime} \rt)}{R_C^*} = \frac{1+\delta}{1-\delta^\prime} > 1,
\end{equation}
where the first inequality is by substitution and using the fact that $\epsilon^{-1} P_{b, \lambda_{T^*_C}, \wt{\rho}_{R^*_C}}$ is non-negative, and the second inequality is because $\delta^\prime < \delta$. Therefore, $(R^*_C, T^*_C)$ is not a maximizer, which is a contradiction. Therefore, $R/T$ cannot be $o(1)$.  
\end{enumerate}
Therefore, $R/T$ must asymptotically be a constant. In other words, the model size $N$ and dataset size $D$ must scale equally with compute budget $C$.

\end{proof}

\begin{figure}  
    \includegraphics[scale=0.62]{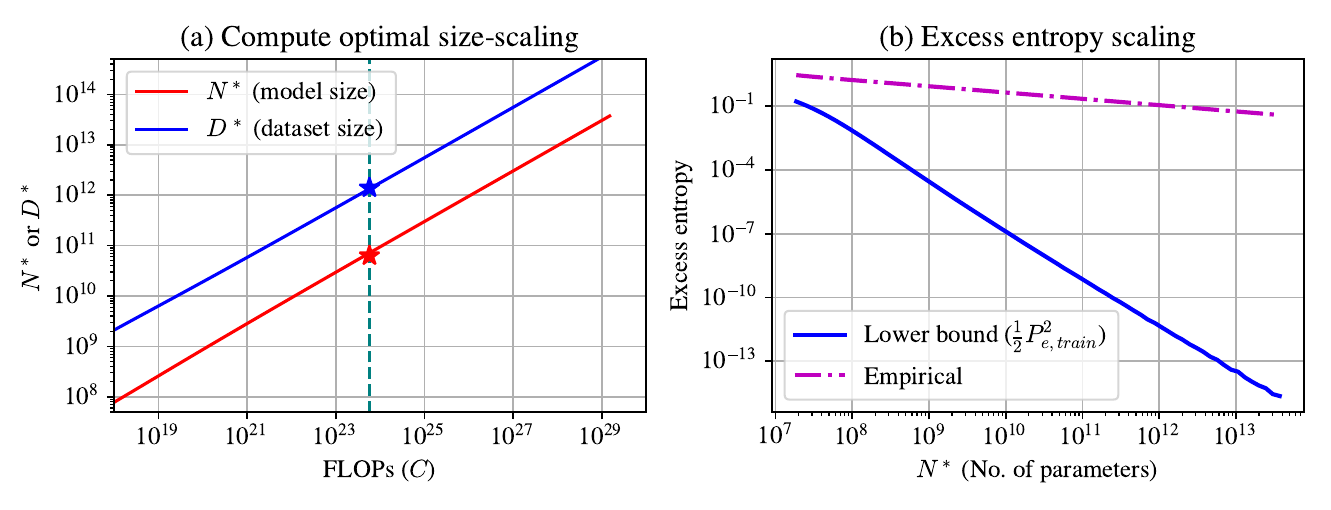}
    \caption{(a) Model and dataset size pair $(N^*, D^*)$ that maximizes \eqref{eqn:maximize_skills_learnt} as a function of compute budget $C$. The curves being parallel in logarithmic scale indicates that model size and dataset size must scale equally with $C$. In this subplot, we set $\varsigma = 2 \times 10^5$, $\tau = 8 \times 10^5$, and $d_t = 6$. The markers indicate $(N, D)$ corresponding to the compute-optimal performance predicted by the Chinchilla rule \cite{hoffmann2022training} when compute budget is $5.76 \times 10^{23}$ (dashed vertical line);  (b) Scaling of the lower bound of excess entropy in \eqref{eqn:excess_entropy_LB} compared with empirically observed scaling according to \cite{hoffmann2022training} as a function of the model size $N^*$.}
    \label{fig:size_and_loss_scaling}
\end{figure}

\subsection{Scaling of excess entropy}

Under finitary analysis, for every compute budget $C$, there is an associated error rate $P_{b, \lambda_T, \wt{\rho}_R}/\epsilon$ which indicates a fraction of concepts are not learnt even after the peeling process is complete. Similar to \citep{arora2023theory}, we assume that cloze questions associated with text pieces connected to unlearnt concepts are incorrectly answered. Therefore, the training error is equivalent to the probability that a check node (text piece) is connected to the stopping set (unlearnt concepts) at least twice. Refer to \citep{richardson2008modern} on stopping sets. The training error corresponding to $(N, D)$ given a compute budget $C$ is (see Appendix~\ref{apndx:excess_entropy_lb} for the calculation):
\begin{equation}    
    P_{e, train} = 1-\lt(1-\frac{d_t P_b}{R}\rt)^{R-1} - d_t P_b \lt(1-\frac{d_t P_b}{R}\rt)^{R-1} \approx 4 d_t^2 \epsilon^{-2} P_{b, \lambda_T, \wt{\rho}_R}^2.
\end{equation}

Using Pinsker's inequality that relates Kullback-Leibler divergence to total variational distance as $D_{KL}(P || Q) \geq \frac{1}{2} ||P-Q||_1^2$, and the equivalence between total variation distance and error rate on cloze questions \cite{arora2023theory}, we obtain the following lower bound on excess entropy:
\begin{equation}
    \label{eqn:excess_entropy_LB}
    \mbox{Excess entropy} \geq \frac{1}{2} P_{e, train}^2 \approx 2 d_t^4 \epsilon^{-4} P_{b, \lambda_T, \wt{\rho}_R}^4.
\end{equation}

Empirically observed excess entropy scaling of transformer-based models and a lower bound according to our framework in \eqref{eqn:excess_entropy_LB} are depicted in Figure~\ref{fig:size_and_loss_scaling}(b). The gap between them indicates the scope for either tightening theoretical lower bound or devising architectures that offer better empirical scaling or both.

\subsection{Emergence}
\label{sec:emergence}

As the model size increases (along with the compute budget $C$) there is a sharp increase in performance (e.g. accuracy) of the language model on certain complex tasks which the model was not trained on. We aim to provide a simple explanation to this empirical phenomenon using random graph theory.

\begin{figure}  
    \includegraphics[scale=0.62]{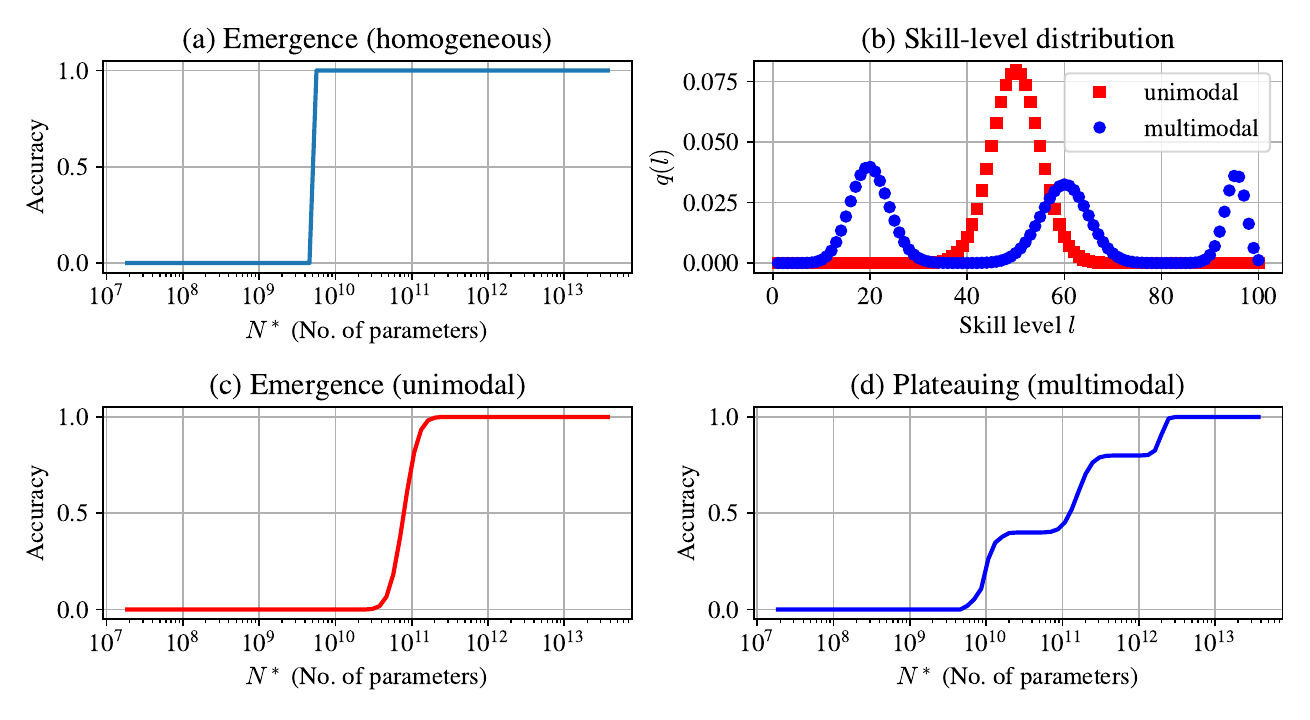}
    \caption{{\color{black}Accuracy of the language model sharply increases after the model size (equivalently $C$) exceeds a threshold, which is a consequence of the emergence of a GCC in a skill graph $G_2^{(l)}$. (a) Step increase in accuracy for a homogeneous task.  (b) Skill level distribution $q(l)$ for unimodal and multimodal heterogeneous tasks. (c) Smooth emergence for unimodal heterogeneous task. (d) Plateauing phenomena as a consequence of a task requiring diverse skills according to multimodal distribution. In this subplot, we used the following values for the parameters: number of skill levels $L = 100$, $S^{(l)} = 10^3$, $\eta_l = \exp(7l/L)$, $\sigma_l = \log_2(l)$ for all $l \in \{1,\ldots,L\}$, $q(m) = 1/6$ for all $m \in \{2,\ldots,7\}$, }.
    }
    \label{fig:emergence_plateau}
\end{figure}

Let $p_l$ denote the probability there is a direct link between any two pairs of skills at level $l$. For a fixed $(R, T)$, $p_l$ evaluates as
{\color{black} (see Appendix~\ref{apndx:calculation_of_p_l} for the derivation):
\begin{equation}
    p_l \geq 
    \begin{cases}
        \lt( 1- g(R, p_{rr}, \eta_l)\rt) \gamma_{l-1}^{2 \sigma_l} & \mbox{if } \eta_l \leq \binom{R}{2} p_{rr}\\
        \frac{1}{\sqrt{8 \eta_l \lt( 1 - \eta_l/\binom{R}{2} \rt)}} g(R, p_{rr}, \eta_l) \gamma_{l-1}^{2 \sigma_l} & \mbox{otherwise,} 
    \end{cases} \label{eqn:p_l}
\end{equation}
where $g(R, p_{rr}, \eta_l) = \exp\lt(-\binom{R}{2} D_{KL}\lt( \frac{\eta_l}{\binom{R}{2}} || p_{rr} \rt) \rt)$, $p_{rr}$ is the probability that a pair of concepts occur in at least one text piece, and $\gamma_{l-1}$ is the probability that a skill belongs to GCC of $G^{(l)}_2$ (which we show next). Recall the definition of emergence from Section~\ref{subsec:defining_emergence} as the ability of a language model to compose all pairs of skills within a subset of skills in a given level $l$ required for a specific task. In this regard, note that the skill graph $G_2^{(l)}$ is equivalent to an Erd\"{o}s-R\'{e}nyi (ER) random graph with $S^{(l)}$ nodes and edge probability $p_l$. A pair of skills in level $l$ can be composed if there is a path between them in $G_2^{(l)}$, and the probability that there is a path between any pair of skills is bounded below by the probability that both skills are in GCC of $G_2^{(l)}$.
}

Suppose $\gamma_l$ is ratio of the size of GCC in $G_2$ to the total number of skills (number of nodes in $G^{(l)}_2$) at level $l$, i.e., $\gamma_l = S^{(l)}_{GCC}/S^{(l)}$. Note that $\gamma_l$ is equivalent to the probability that a skill at level $l$ is in GCC. From random graph theory \cite{Barabasi2016}, for an ER graph with edge probability $p_l$, the solution to the following equation yields $\gamma_l$:
\begin{equation}
    \gamma_l = 1-\exp\lt( - p_l S^{(l)} \gamma_l\rt),
\end{equation}
where $p_l S^{(l)}$ is the mean degree of the ER skill graph. The solution is 
\begin{equation}
    \gamma_l =  1 + \frac{1}{p_l S^{(l)}} W_{0}\lt(-p_l S^{(l)} \exp\lt( -p_l S^{(l)} \rt) \rt), \label{eqn:gamma_l}
\end{equation}
where $W_{0}(\cdot)$ is the upper branch of the Lambert $W$ function. The ratio $\gamma_l$ has a phase transition at $p_l = 1/S^{(l)}$. To see this, note that $W_0(x e^x) = x$ for $x<-1$. Therefore, whenever $p_l < 1/S^{(l)}$, $\gamma_l$ is identically zero. As $p_l$ increases beyond $1/S^{(l)}$, $|W_0(\cdot)|$ starts decreasing and consequently, $\gamma_l$ increases.

{\color{black} For a particular skill level $l$, $\gamma_l$ and $p_l$ can be computed recursively using \eqref{eqn:gamma_l} and \eqref{eqn:p_l}, with the following initial conditions: $\gamma_0$ = 1 and $\sigma_l$ = 0 (observe that no prerequisite skill is required to learn basic skills, i.e., skills at $l=1$). Suppose a task requires $m$ skills at level $l$ (a homogeneous task), the model performs the task successfully only if there is a path between every pair of those skills in $G_2^{(l)}$. Therefore, a sufficient condition is that all skills required for the task are in GCC. The accuracy of the task is:
\begin{align}
    \label{eqn:emergence_accuracy_homogeneous}
    \mbox{Accuracy} &= \Pr\{\mbox{Composition of $m$ skills in }\mc{S}^{(l)}\}, \nonumber \\
    &= \Pr\{\mbox{There exists a path between every pair among $m$ skills in }\mc{S}^{(l)}\}, \nonumber \\
    & \geq \Pr\{\mbox{All $m$ skills $\in$ GCC of $G_2^{(l)}$ } \} = \gamma_l^{m}.
\end{align}

The accuracy curve in Figure~\ref{fig:emergence_plateau} shows a step phase transition with increasing model size. This is a consequence of the homogeneous task requiring skills at only one level. However, empirically observed accuracy curves exhibit smoother phase transitions \cite{wei2022emergent}. To demonstrate a smoother phase transition, consider a complex heterogeneous task that requires diverse skills at different levels, in particular consisting of subtasks requiring $m$ skills at level $l$ with probability $q(l, m)$. Task accuracy is:
\begin{equation}
    \mbox{Accuracy} \geq \sum_{l, m} q(l, m) \gamma_l^{m}.
    \label{eqn:emergence_accuracy_heterogeneous}
\end{equation}
The overall accuracy is therefore a weighted average of the emergence curves. To illustrate this using a numerical example, consider a skill graph $G_2$ with $L=100$ levels, let $q(m, l) = q(m) q(l)$ with $q(m) = 1/6$ for $m \in \{2,\ldots,7\}$ and consider a binomial distribution, $\mbox{Binomial}(L, 1/2)$, over the skill levels, i.e., $q(l) = \binom{L}{l} (\frac{1}{2})^L$ as shown in Figure~\ref{fig:emergence_plateau}(b). The corresponding accuracy according to \eqref{eqn:emergence_accuracy_heterogeneous} is shown in Figure~\ref{fig:emergence_plateau}(c). In general, a smooth single phase transition can be obtained by a unimodal distribution over skill levels with a sufficiently large variance.
}

{\color{black}
\subsection{Plateauing}

According to our framework, plateauing in accuracy after encountering an emergent phenomenon (with scaling) occurs because of the greater diversity of skills (at multiple levels) required by the heterogeneous task under consideration. In particular, we observe plateauing when the skill levels required for a task follows a multimodal distribution. To illustrate this, consider a mixture of binomial distributions over the skill levels, i.e., $q(l) = \sum_i w_i \mbox{Binomial}(L, \pi_i)$, with $(w_i)_i \in (2/5, 2/5, 1/5)$ and $(\pi_i)_i = (0.2, 0.6, 0.95)$ is shown in Figure~\ref{fig:emergence_plateau}(b). The corresponding accuracy according to \eqref{eqn:emergence_accuracy_heterogeneous} is shown in Figure~\ref{fig:emergence_plateau}(d). In general, a multimodal distribution over skill levels results in emergence at multiple scales and plateaus between them. Our framework yields an interesting trend associated with the plateauing of performance: plateauing indicates the possibility of one (or more) upcoming emergent phenomenon (phenomena), which one would encounter with further scaling.
}

\section{Conclusion}

We presented a simple unified framework to explain all three empirical phenomena observed with size scaling of language models. Existing frameworks assume a compute-optimal scaling rule and only then explain emergent phenomena. We use non-asymptotic information theory to explain both compute-optimal size scaling and emergent abilities of language models. Moreover, we explain the more recent empirical phenomenon of plateauing of performance using random network theory, and also predict that plateauing implies the possibility of multiple emergent phenomena with further size scaling. 

There are some open questions and considerations worth exploring. We do not take training time into account in our framework. Therefore, we do not explain (or attempt to explain) empirical phenomena such as double descent or grokking \cite{HuangHHLS2024}. Perhaps future work can either incorporate training epochs in our framework or propose a different novel framework to explain them. Even though the sequential learning of concepts through peeling process gives certain ordering to concepts, there is no inherent ordering of concepts and we do not consider concept hierarchies \cite{YuEV2023, YuMV2023}. One can explore the advantages of doing so. Evidently, the degree distribution of texts is related to the model's architecture. Therefore, optimizing the degree distribution enables a language model to learn more concepts from text pieces. Further, the quality of the training data is related to text-to-concept edge deletions in sequential concept learning, which can be incorporated into our framework. Such optimization is a line of future work that has natural analogues in optimization of communication systems and fault-tolerant computation \cite{richardson2008modern}.

\section*{Acknowledgment}
We appreciate discussion with and valuable suggestions from Razan Baltaji, Akhil Bhimaraju, and Moulik Choraria.

This work was supported in part by National Science Foundation grant PHY-2112890 and by DARPA grant ``Modeling and Measuring Scientific Creativity''.

\bibliography{bibfile}
\bibliographystyle{plain}

\appendix

\section{Solving \eqref{eqn:maximize_skills_learnt}: Maximizing concept learning under compute budget constraint}
\label{apndx:solve_maximize_skills_learnt}
\subsection{A brief summary of belief propagation decoding of LDPC codes under erasure}
\label{apndx:ldpc_bp_summary}
Low-density parity check (LDPC) codes are a family of error-correction codes, whose noisy codewords can be decoded in a computationally efficient manner using belief propagation. Before getting into deriving the probability that a concept is learnt from text pieces, we provide a very short summary of belief propagation decoding of LDPC codes when codeword symbols are corrupted by erasure. An LDPC code can be graphically represented by a Tanner graph, which is a bipartite graph with a set of variable nodes (codeword symbols) and check nodes (parity checks). Each codeword satisfies all the parity checks. Given a degree distribution pair (for variable and check nodes), there is a channel noise threshold $\epsilon^*$ above which the decoder fails to decode the transmitted codeword. Consider a noisy version of a transmitted codeword with $\epsilon < \epsilon^*$ fraction of the symbols are erased. Belief propagation decoding starts by finding a check node where all except one symbol are recieved correctly (not erased). Then the erased symbol is determined as the one satisfying the parity. The next iteration starts by finding another check node with only one erased codeword symbol. This process continues until either all the codeword symbols are decoded or the decoder gets stuck with no parity checks containing only one erased symbol. The latter is declared as a decoding failure.

\subsection{Computing $\Pr\{\mbox{$r \notin \mc{R}_+ | R, T$}\}$}
\label{apndx:Pr_concept_equiv_Pb}
\begin{align}
    \wt{P}_{R} &= \mbox{Binomial}(R/\epsilon, p), \mbox{ and}\\
    \wt{L}_{T} = L_{T} &= \mbox{Binomial}(T, p),
\end{align}
respectively. Here, for a compute budget $C$, we set $T = \frac{C}{6 \varsigma \tau R}$.

\label{apndx:pr_skill_not_learnt_equiv_cwd_BER}
\begin{figure}[H]
    \centering
    \includegraphics[scale=0.3]{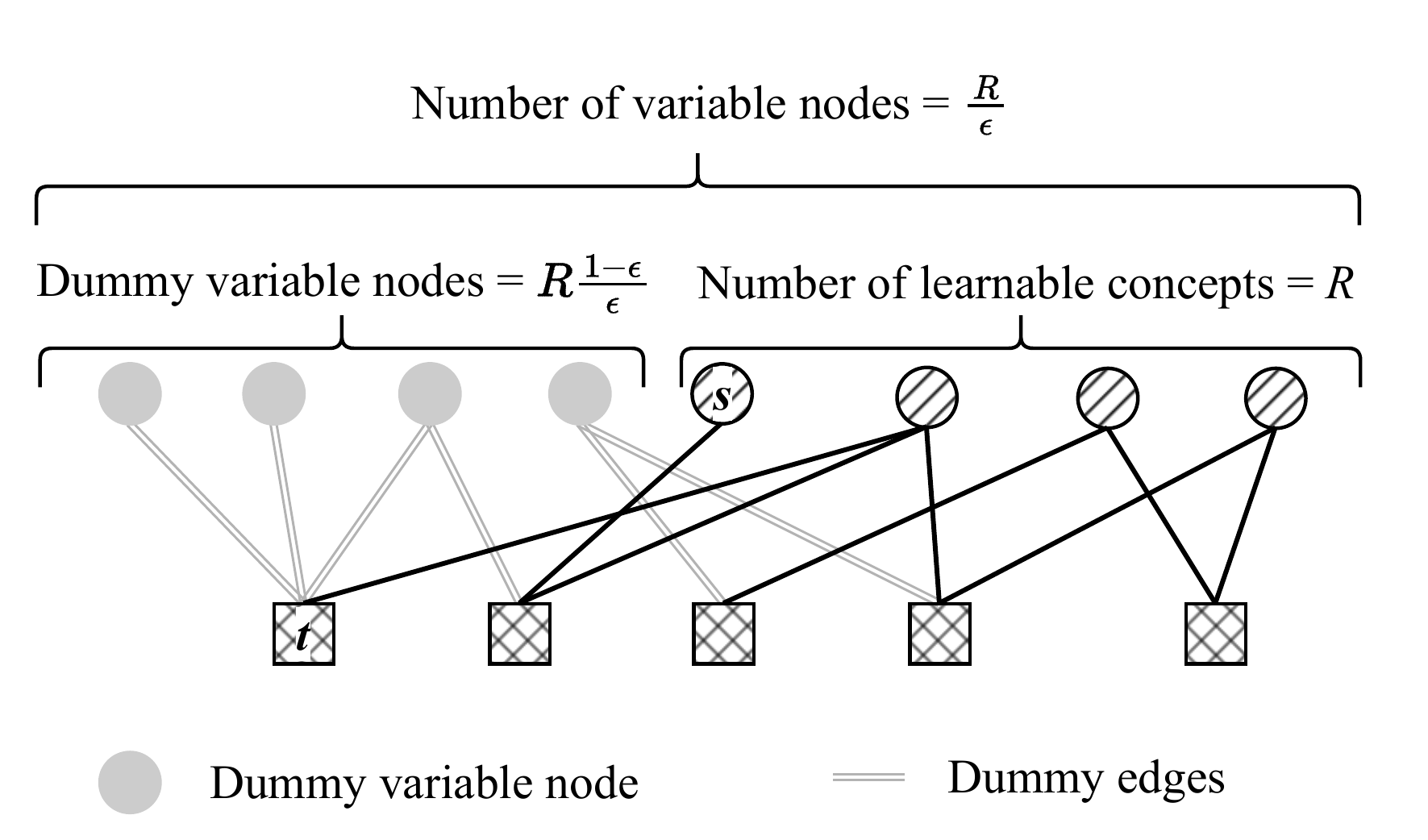}
    \caption{Bipartite graph $\wt{G}_1$.}
    \label{fig:G1_equiv_tildeG1}
\end{figure}

In belief propagation decoding (peeling) of a codeword affected by erasures, the post-decoding bit erasure rate depends only on the residual graph consisting only variable nodes corresponding to erased symbols, parity checks connecting those variable nodes, and edges between them. Therefore, the post-decoding bit erasure rate is invariant to the choice of $\epsilon$.\footnote{Here we choose $\epsilon = 0.5$ (instead of close to 0 or 1) for numerical convenience.} Therefore, we can make the following equivalence between concept learning and bit erasure rate:
\begin{equation}
    \Pr\{\mbox{$r \notin \mc{R}_+ | R, T$}\} = \frac{P_{b, \lambda_T, \wt{\rho}_R}}{\epsilon},
\end{equation}
where $P_{b, \lambda_T, \wt{\rho}_R}$ is the post-decoding bit erasure rate, and $\lambda_{T}(x) = \frac{L'_{T}(x)}{L'_{T}(1)}$ and $\wt{\rho}_{R}(x) = \frac{\wt{P}'_{R}(x)}{\wt{P}'_{R}(1)}$ are variable and check node degree distributions from edge perspective, respectively. To compute $P_{b, \lambda_T, \wt{\rho}_R}$ we need the following ingredients: degree distributions $\lambda_T$ and $\wt{\rho}_R$, decoding threshold $\epsilon^*$, and scaling factors $\nu^*$ and $\alpha$ which depend on degree distributions. Degree distribution of text pieces from the node perspective is
\begin{align}
    P_{R}(x) &= \sum_i \binom{R}{i} p^i (1-p)^{R-i} x^i,\\
    \wt{P}_{R}(x) &= \sum_i \binom{R/\epsilon}{i} p^i (1-p)^{(R/\epsilon)-i} x^i,    
\end{align}
which gives the following text degree distribution from the edge perspective:
\begin{equation}
    \wt{\rho}_{R}(x) = \frac{\wt{P}'_{R}(x)}{\wt{P}'_{R}(1)} = \frac{\sum_i i \binom{R/\epsilon}{i} p^i (1-p)^{(R/\epsilon)-i} x^{i-1}}{\sum_i i \binom{R/\epsilon}{i} p^i (1-p)^{(R/\epsilon)-i}}.
\end{equation}
Noting that $i \binom{R/\epsilon}{i} = R \binom{R/\epsilon-1}{i-1}$ we obtain the degree distribution of text pieces from edge perspective:
\begin{align}
    \wt{\rho}_{R}(x) &= \frac{\sum^{(R/\epsilon)-1}_{j=0} \frac{R}{\epsilon} p  \binom{R/\epsilon-1}{j} p^{i-1} (1-p)^{(R/\epsilon)-i} x^{i-1}}{\frac{R}{\epsilon} p}\\
    &= (p x + (1-p))^{\frac{R}{\epsilon}-1}.
\end{align}
Similarly, the degree distribution of concepts (remains unchanged for a fixed $R$, $T$) from the edge perspective is
\begin{equation}
    \lambda_{T}(x) = (p x + (1-p))^{T-1}.
\end{equation}

Next the belief propagation decoding threshold $\epsilon^*$ is obtained from its fixed point characterization \cite[Section~3.12]{richardson2008modern}:
\begin{equation}
    \epsilon^* = \inf\{\epsilon \in [0, 1]: x = f(x, \epsilon) \mbox{ has a solution in } x\in (0, 1] \},
\end{equation}
where $f(x, \epsilon) = \epsilon \lambda_T(1-\wt{\rho}_R(1-x))$, and the critical point $x^*$ satisfies $x^* = f(x^*, \epsilon^*)$. 

From finite-length scaling law of error rates in belief propagation decoding \cite[Section~3.23]{richardson2008modern}, we have the following (approximate) closed-form expression for post-decoding bit erasure rate:
\begin{equation}
    P_{b, \lambda_{T}, \wt{\rho}_{R}} \approx \nu^{*} Q\left(\sqrt{\frac{R}{\epsilon}}\frac{(\epsilon^* - \epsilon)}{\alpha}\right),
\end{equation}
where $\nu^* = \epsilon^* ~ L_{T}(1-\wt{\rho}_\mc{R}(1-x^*))$, $Q(\cdot)$ is the complementary standard Gaussian cumulative distribution function, and the scaling parameter $\alpha$ is given by \cite[Section~3.23]{richardson2008modern}
\begin{align}
    \alpha = & \lt(\frac{\rho(\bar{x}^*)^2 - \rho((\bar{x}^*)^2) + \rho^\prime(\bar{x}^*) (1 - 2 x^* \rho(\bar{x}^*)) - (\bar{x}^*)^2 \rho^\prime((\bar{x}^*)^2)}{L^\prime_T(1) \lambda_T(y^*)^2 \rho^\prime(\bar{x}^*)^2} \rt. + \\
    & ~ \lt. \frac{(\epsilon^*)^2 \lambda(y^*)^2 - (\epsilon^*)^2 \lambda_T((y^*)^2) - (y^*)^2 (\epsilon^*)^2 \lambda_T^\prime((y^*)^2)}{L_T^\prime(1) \lambda(y^*)^2} \rt)^{1/2},
\end{align}
where $x^*$ is the unique critical point, $\bar{x}^* = 1-x^*$, and $y^* = 1-\wt{\rho}_R(1-x^*)$.

\section{Calculation of $P_{e, train}$}

\label{apndx:excess_entropy_lb}

Recall that the training error is equivalent to finding the probability that a text piece is connected to an unlearnt concept, i.e.,
\begin{align}
    P_{e, train} &= \Pr\lt( |\{e_{tr} \in G_1^{(C)}\}_{r \in \mc{R}_{-}}| \geq 2 \rt), \mbox{ for any $t \in \mc{T}$}, \\
    &= \sum_{k \geq 2}^R \Pr\lt( \mbox{degree}(t) = k, \{|\{e_{tr} \in G_1^{(C)}\}_{r \in \mc{R}_{-}}| \leq 1\}^c \rt), \\
    &= \sum_{k \geq 2}^R \binom{R}{k} p^k (1-p)^{R-k} \lt( 1- (1-P_b)^k - k R (1-P_b)^{k-1} \rt), \label{eqn:train_err_3_terms_summation}
\end{align}
where the edge probability $p = d_t/R$ and $P_b = \epsilon^{-1} P_{b, \lambda_T, \wt{\rho}_R}$. The last equation simplifies to:
\begin{equation}
    P_{e, train} = 1-\lt(1-\frac{d_t P_b}{R}\rt)^{R-1} - d_t P_b \lt(1-\frac{d_t P_b}{R}\rt)^{R-1},
\end{equation}
which is obtained by computing the expectation of each of the three terms within the summation in \eqref{eqn:train_err_3_terms_summation} and substituting $p = d_t/R$. Further using the approximations $(1-x)^n \approx 1-nx$ and $R-1 \approx R$ for large $R$, the training error is approximately $P_{e, train} \approx 4 d_t^2 P_b^2$.

\section{Calculation of $p_l$}
\label{apndx:calculation_of_p_l}
Recall that $p_l$ is the probability that the composition of a pair of skills in level $l$ is seen at least $\eta_l$ times in the training data.
For a fixed pair of skills $(s_1, s_2)$, the probability there is a path between the pair of skills through some pair of concepts $(r_1, r_2)$ is
\begin{align*}
    \Pr(s_1 - r_1 - r_2 - s_2) &= \Pr(s_1 - r_1, r_1 - r_2, r_2 - s_2), \\
    &= \Pr(s_1 - r_1) \Pr(r_1 - r_2) \Pr(r_2 - s_2), \\
    &= \frac{1}{S^{(l)}} \lt(1-\lt(1-\frac{d_t^2}{R^2}\rt)^T\rt) \frac{1}{S^{(l)}} =: p_{rr},    
\end{align*}
where the second inequality is due to independence of $s_1 - r_1$, $r_1 - r_2$ and $r_2 - s_2$. 
Let $X$ be a random variable indicating the number of distinct paths $s_1 - r_1 - r_2 - s_2$ between $s_1$ and $s_2$. Now, $\Pr\lt(\mbox{composition of} (s_1, s_2)  \mbox{ in training data}\rt) =: p_l$ is
\begin{align*}
    p_l &= \Pr(X \geq \eta_l, \mbox{all prerequisite skills of $s_1$ and $s_2$ are acquired}), \\
    &\geq  \Pr(X \geq \eta_l) \Pr(\mbox{all prerequisite skills of $s_1$ and $s_2$ are acquired}).
\end{align*}
Note that the total number of distinct paths between $s_1$ and $s_2$ equals the total number of concept pairs $(r_1, r_2)$ which is $\binom{R}{2}$, each with probability $p_{rr}$. Therefore, $X$ follows a binomial distribution, i.e., $\mbox{Binomial}\lt( \binom{R}{2},~p_{rr} \rt)$. From Chernoff's bound for binomial distribution, we obtain the following lower bounds:
\begin{equation}
    \Pr(X \geq \eta_l) \geq 
    \begin{cases}
    1 -  \exp\lt(-\binom{R}{2} D_{KL}\lt( \frac{\eta_l}{\binom{R}{2}} || p_{rr} \rt) \rt) & \mbox{if } \eta_l \leq \binom{R}{2} p_{rr} \\
    \frac{1}{\sqrt{8 \eta_l \lt( 1 - \frac{\eta_l}{\binom{R}{2}}\rt)}} \exp\lt(-\binom{R}{2} D_{KL}\lt( \frac{\eta_l}{\binom{R}{2}} || p_{rr} \rt) \rt) & \mbox{otherwise.}
    \end{cases} 
\end{equation}
The probability of acquiring prerequisite skills of both skills $s_1$ and $s_2$ is (assuming $R \gg \sigma_l$),
\begin{align*}
    \Pr(\mbox{all prerequisite skills of $s_1$ and $s_2$ are acquired}) &\geq \Pr(\mbox{all $\sigma_l$ prerequisites $\in$ GCC})^2, \\
    &= \gamma^{2\sigma_l}_{l-1}.
\end{align*}

\end{document}